\newtheorem{definition}{Definition}
\newtheorem{lemma}[definition]{Lemma}
\newtheorem{theorem}[definition]{Theorem}
\newtheorem{corollary}[definition]{Corollary}
\def\squareforqed{\hbox{\rlap{$\sqcap$}$\sqcup$}}
\def\qed{\ifmmode\squareforqed\else{\unskip\nobreak\hfil
\penalty50\hskip1em\null\nobreak\hfil\squareforqed
\parfillskip=0pt\finalhyphendemerits=0\endgraf}\fi}
\def\endenv{\ifmmode\;\else{\unskip\nobreak\hfil
\penalty50\hskip1em\null\nobreak\hfil\;
\parfillskip=0pt\finalhyphendemerits=0\endgraf}\fi}
\mathchardef\ordinarycolon\mathcode`\:
\def\vcentcolon{\mathrel{\mathop\ordinarycolon}}
\newcommand{\nc}{\newcommand}
\nc{\rnc}{\renewcommand}
\nc{\beq}{\begin{equation}}
\nc{\eeq}{{\end{equation}}}
\nc{\beqa}{\begin{eqnarray}}
\nc{\eeqa}{\end{eqnarray}}
\nc{\lbar}[1]{\overline{#1}}
\nc{\bra}[1]{\langle#1|}
\nc{\ket}[1]{|#1\rangle}
\nc{\ketbra}[2]{|#1\rangle\!\langle#2|}
\nc{\braket}[2]{\langle#1|#2\rangle}
\nc{\proj}[1]{| #1\rangle\!\langle #1 |}
\nc{\avg}[1]{\langle#1\rangle}
\nc{\Rank}{\operatorname{Rank}}
\nc{\smfrac}[2]{\mbox{$\frac{#1}{#2}$}}
\nc{\Tr}{\operatorname{Tr}}
\nc{\tr}{\operatorname{Tr}}
\nc{\id}{\operatorname{id}}
\nc{\ox}{\otimes}
\nc{\dg}{\dagger}
\nc{\dn}{\downarrow}
\nc{\cA}{{\cal A}}
\nc{\cB}{{\cal B}}
\nc{\cC}{{\cal C}}
\nc{\cD}{{\cal D}}
\nc{\cE}{{\cal E}}
\nc{\cF}{{\cal F}}
\nc{\cG}{{\cal G}}
\nc{\cH}{{\cal H}}
\nc{\cI}{{\cal I}}
\nc{\cJ}{{\cal J}}
\nc{\cK}{{\cal K}}
\nc{\cL}{{\cal L}}
\nc{\cM}{{\cal M}}
\nc{\cN}{{\cal N}}
\nc{\cO}{{\cal O}}
\nc{\cP}{{\cal P}}
\nc{\cR}{{\cal R}}
\nc{\cS}{{\cal S}}
\nc{\cT}{{\cal T}}
\nc{\cX}{{\cal X}}
\nc{\cY}{{\cal Y}}
\nc{\cZ}{{\cal Z}}
\nc{\supp}{{\operatorname{supp}}}
\nc{\var}{\operatorname{var}}
\nc{\rar}{\rightarrow}
\nc{\lrar}{\longrightarrow}
\nc{\polylog}{\operatorname{polylog}}
\def\a{\alpha}
\def\ph{\varphi}
\def\D{\Delta}
\nc{\RR}{{{\mathbb R}}}
\nc{\CC}{{{\mathbb C}}}
\nc{\FF}{{{\mathbb F}}}
\nc{\NN}{{{\mathbb N}}}
\nc{\ZZ}{{{\mathbb Z}}}
\nc{\PP}{{{\mathbb P}}}
\nc{\QQ}{{{\mathbb Q}}}
\nc{\UU}{{{\mathbb U}}}
\nc{\EE}{{{\mathbb E}}}
\nc{\Icoh}{{I^c}}
\nc{\Qca}{{Q_{\rm ss}}}
\nc{\Qcaa}{{Q^{(1)}_{\rm ss}}}
\nc{\Dcaa}{{D^{(1)}_{{\rm ss}\rightarrow}}}
\nc{\Dca}{{D_{{\rm ss}\rightarrow}}}
\nc{\be}{\begin{equation}}
\nc{\ee}{{\end{equation}}}
\nc{\bea}{\begin{eqnarray}}
\nc{\eea}{\end{eqnarray}}
\nc{\Hom}[2]{\mbox{Hom}(\CC^{#1},\CC^{#2})}
\nc{\rU}{\mbox{U}}
\begin{document}

\author{
\authorblockN{Graeme Smith}
\authorblockA{IBM TJ Watson Research Center\\
1101 Kitchawan Road\\
 Yorktown NY 10598\\
graemesm@us.ibm.com}
\and
\authorblockN{John A. Smolin}
\authorblockA{IBM TJ Watson Research Center\\
1101 Kitchawan Road\\
 Yorktown NY 10598\\
smolin@watson.ibm.com}
}

\title{Additive extensions of a quantum channel}

\maketitle

\begin{abstract}
We study extensions of a quantum channel whose one-way capacities are
described by a single-letter formula.  This provides a simple
technique for generating powerful upper bounds on the capacities of a
general quantum channel.  We apply this technique to two qubit
channels of particular interest---the depolarizing channel and the
channel with independent phase and amplitude noise.  Our study of the
latter demonstrates that the key rate of BB84 with one-way
post-processing and quantum bit error rate $q$ cannot exceed $H(1/2 -
2q(1-q)) - H(2q(1-q))$.
\end{abstract}

\section{Introduction}

Perhaps the central problem of information theory is finding the rate at which
information can be transmitted through a noisy channel.  Indeed, Shannon
created the field with his 1948 paper \cite{Shannon48} showing that
the capacity of a noisy channel is equal to the maximum mutual
information over all input distributions to {\em a single use} of the
channel, even though the encoding needs, in general, to use an
asymptotically large number of channel uses.

However, it has long been known that the apparent quantum
generalization of the mutual information, namely the {\em coherent
information}, does not yield a single-letter formula for the quantum
information capacity $Q$ \cite{SS96,DSS98}.  Similarly, though the
private capacity of a classical broadcast channel is known, and given
by a single-letter formula \cite{CK78}, the private classical capacity
of a quantum channel is not known.  

The quantum capacity is given by \cite{Lloyd97,D03,BNS98}:
\begin{equation}
Q=\lim_{n\rightarrow\infty} \frac{1}{n} {\rm max}_{\phi_n} 
I^{\rm c}\left({\cal N}^{\ox n}, {\phi_n}\right) 
\end{equation}
where
\begin{equation}
I^{\rm c}(\cN,\phi)=I^{\rm c}
\left(I{\otimes}{\cal N}(\proj{\phi^{AB}})\right)\ .
\end{equation}
Here $\ket{\phi^{AB}}$ is a purification of $\phi$ and $I^{\rm
c}(\rho_{AB})=S(\rho_B){-}S(\rho_{AB})$ with $S(\rho){=}-{\rm
Tr}(\rho\log \rho)$.
The private capacity is given by \cite{D03}
\begin{equation}
C_p(\cN) = \lim_{n \rightarrow \infty} \frac{1}{n} C^{(1)}_p(\cN^{\ox n})
\end{equation}
where 
\begin{equation}
C^{(1)}_p(\cN) \equiv \sup_{\{p_x, \ket{\ph_x}\}, X\rightarrow T}
\left(I(T;B)_\omega - I(T;E)_\omega \right),
\end{equation}
with $\omega_{ABE} = \sum_{x,t}p(t|x)p(x)\proj{t}_A \ox
U_{\cN}\proj{\ph_x}U_{\cN}^\dg$ and $U_\cN$ an isometric extension of
$\cN$ (i.e., $\cN(\rho) = \Tr_E U_\cN \rho U_\cN^\dg$).  The mutual information
is defined, as usual, according to $I(T;B)_{\omega_{TB}} = S(T)_{\omega_T} + S(B)_{\omega_B} - S(BT)_{\omega_{BT}}$, where
we have used subscripts on the states to indicate which system they live on (e.g., $\omega_B = \Tr_T \omega_{BT}$), and used the
notation $S(B)_{\omega_B} = S(\omega_B)$.  When it is clear which state we are referring to, we will omit the subscript on the entropy.

Since the formulas for these capacities involve maximizations over
ever growing numbers of channel uses, we cannot evaluate them at all.
This unsatisfying situation is a reflection of our lack of
understanding of how to choose asymptotically good codes.  Our best
understanding is presented in \cite{SmithSmo0506}.

Fortunately, we {\em can} evaluate capacity for some channels---degradable
ones \cite{DS03}.  In general, channels for which the coherent information is
additive, {\em i.e.} 
\begin{equation}
Q(\cN) = Q^{(1)}(\cN) \equiv \max_\phi I^c(\cal N,\phi),
\end{equation}
of which degradable channels are an example, are much easier
to deal with than arbitrary channels.  Once we have an understanding
of additive channels, we can use them to bound the capacities of
other channels \cite{SSW06} \cite{S07}.  Here we improve upon that
work and develop new tighter and much simpler upper bounds.

First we will define the concepts of additive and degradable 
extensions to a channel and prove they have single-letter formulas for their
capacities.  We then use a particularly simple class of degradable extensions,
which we call 'flagged extensions' to bound the quantum and 
private capacities.  We also show that the best known previous techniques
are special cases of our new bound.  Finally, we bound the key rate of 
BB84 quantum key distribution \cite{BB84} for a channel with bit error
rate $q$ by $H(1/2 - 2q(1-q))-H(2q(1-q))$.

\section{Additive and Degradable Extensions}

\begin{definition}
We call $\cT$ an {\em additive extension} of a quantum channel $\cN$
if there is a second channel $\cR$ such that $\cN = \cR \circ \cT$ and
$Q(\cT) = Q^{(1)}(\cT)$.\footnote{Of course, we could also define additive extensions which have $C_p^{(1)}(\cT) = C_p(\cT)$ in order to 
find upper bounds on $C_p(\cN)$.  However, since the only channels we know with $C^{(1)}_p(\cT) = C_p(\cT)$ also have 
$C^{(1)}_p(\cT) = Q^{(1)}(\cT)$, we will not pursue this approach here.}
\end{definition}
A particularly nice type of additive extension is one which satisfies
the following definition:
\begin{definition}
A channel $\cN$, with isometric extension $U: A \rightarrow BE$ is
called {\em degradable} if there is a {\em degrading map} $\cD$ such
that $\cD \circ \cN = \widehat{\cN}$, where $\widehat{\cN}(\rho) =
\Tr_B U\rho U^\dg$.  $\widehat{\cN}$ is called the {\em complementary
channel} of $\cN$.
\end{definition}

We call an additive extension of a quantum channel that is degradable
a {\em degradable extension}.  Degradable extensions have the
additional property that their coherent information is an upper bound
for the private classical capacity as well as the quantum capacity.  

Our main tool will be the following simple theorem, which bounds the capacity of a quantum channel in terms of 
the capacity of its additive extensions.

\begin{theorem}\label{Thm:Main}
The quantum capacity of a channel $\cN$ satisfies
\begin{equation}
Q(\cN) \leq  Q^{(1)}(\cT),
\end{equation}
for all additive extensions, $\cT$, of $\cN$.  Furthermore, if $\cT$ is degradable, the private classical capacity
of $\cN$ satisfies
\begin{equation}
C_p(\cN) \leq Q^{(1)}(\cT).
\end{equation}
\end{theorem}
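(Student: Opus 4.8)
The plan is to combine two ingredients: the monotonicity of capacity under post-processing by the receiver, together with the defining properties of additive (resp.\ degradable) extensions. First I would prove $Q(\cN) \le Q(\cT)$. Since $\cN = \cR \circ \cT$, any $(n,R)$ code for $\cN$ can be run over $\cT$ with no loss: the sender uses the same encoder, and after the $n$ uses of $\cT$ the receiver first applies $\cR^{\ox n}$ locally---reproducing exactly the output of $\cN^{\ox n}$---and then applies the original $\cN$-decoder. The fidelity and the rate are unchanged, so every rate achievable over $\cN$ is achievable over $\cT$, giving $Q(\cN) \le Q(\cT)$. Invoking the defining property of an additive extension, $Q(\cT) = Q^{(1)}(\cT)$, yields the first inequality.

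For the private statement I would run the same reduction but track the environment carefully. I choose the isometric extension of $\cN$ to factor through that of $\cT$: apply the Stinespring isometry $V_\cT : A \to B'E_T$ of $\cT$, then the isometry $V_\cR : B' \to B E_R$ of $\cR$, so that an eavesdropper attacking $\cN$ holds $E_T E_R$ while an eavesdropper attacking $\cT$ alone holds only $E_T$. Given a private code for $\cN$, the receiver applies $V_\cR^{\ox n}$ himself and \emph{keeps} the registers $E_R^{\ox n}$; the joint state of the message, the receiver, and the $\cT$-eavesdropper's register $E_T^{\ox n}$ is then obtained from the corresponding $\cN$-state simply by discarding $E_R^{\ox n}$ from the eavesdropper. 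Since discarding systems cannot increase the eavesdropper's accessible information, any code private against the $\cN$-eavesdropper is private against the weaker $\cT$-eavesdropper, and correctness is retained because the receiver has reconstructed the full $\cN$ output. Hence $C_p(\cN) \le C_p(\cT)$. Finally, because $\cT$ is degradable, its private capacity is bounded by its single-letter coherent information, $C_p(\cT) \le Q^{(1)}(\cT)$ (indeed with equality), by the result of Devetak and Shor \cite{DS03}; chaining the two bounds gives $C_p(\cN) \le Q^{(1)}(\cT)$.

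The routine parts---the code conversion and the rate/fidelity bookkeeping---are standard. The step needing the most care is the private reduction: one must be precise that factoring the isometric extension of $\cN$ through that of $\cT$ places the extra environment $E_R$ in the receiver's hands rather than the eavesdropper's, so that reusing the code can only improve its security. Everything else follows by citing the additivity of $Q$ for additive extensions and the $C_p = Q^{(1)}$ identity for degradable channels.
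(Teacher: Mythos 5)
Your proposal is correct and follows essentially the same route as the paper: monotonicity of $Q$ and $C_p$ under the post-processing $\cR$ (which the paper asserts in one line and you usefully spell out, including the observation that the $\cT$-eavesdropper is weaker than the $\cN$-eavesdropper), combined with $Q(\cT)=Q^{(1)}(\cT)$ by definition and $C_p(\cT)=Q^{(1)}(\cT)$ for degradable channels. The only quibble is attribution: the identity $C_p = Q^{(1)}$ for degradable channels is the result the paper cites from \cite{S07}, whereas Devetak--Shor \cite{DS03} established the quantum-capacity additivity $Q = Q^{(1)}$ for that class.
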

\begin{proof}
$Q^{(1)}(\cT)=Q(\cT)$ and $Q(\cN)\le Q(\cT)$ follows immediately from
the fact that $\cN$ can be obtained from $\cT$ by apply $\cR$.  If $\cT$ is degradable, 
it was shown in \cite{S07} that 
\begin{equation}
C_p(\cT) = Q^{(1)}(\cT),
\end{equation}
so that we hvae $C_p(\cN)\leq C_p(\cT)= Q^{(1)}(\cT)$.
\end{proof}

\section{Known upper bounds}

In this section we show that the two strongest techniques for upper
bounding the capacities of a quantum channel are encompassed by our
approach.  The first technique, established in \cite{SSW06,WP06} and
best for low noise levels, is to decompose the channel into a convex
combination of degradable channels.  The second, first studied in
\cite{BDSW96,BDEFMS98,Cerf00}, is a no-cloning type argument that can
sometimes be used to show that a very noisy channel has zero capacity.

\subsection{Convex combinations of degradable channels}

\begin{lemma}\label{Lem:Flagged}
Suppose we have 
\begin{equation}
\cN = \sum_i p_i   \cN_i,
\end{equation}

where $\cN_i$ is degradable with degrading map $\cD_i$.  Then 
\begin{equation}
\cT = \sum_i p_i \cN_i \ox \proj{i}
\end{equation}
is a degradable extension of $\cN$, and

\begin{equation}
Q(\cN) \leq \sum_i p_iQ^{(1)}(\cN_i).
\end{equation}
We will call $\cT$ a {\em flagged} degradable extension on $\cN$ since
$i$ keeps track of which $\cN_i$ actually occurred in the 
decomposition of $\cN$.
\end{lemma}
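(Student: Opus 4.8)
The plan is to establish the three assertions in turn---that $\cT$ is an extension of $\cN$, that $\cT$ is degradable, and finally the capacity bound---after which the last will follow by invoking Theorem~\ref{Thm:Main}. The first is immediate: taking $\cR$ to be the channel that discards the flag register, $\cR = \Tr_F$, gives
\begin{equation}
\cR\circ\cT(\rho) = \Tr_F\Big(\sum_i p_i\,\cN_i(\rho)\ox\proj{i}_F\Big) = \sum_i p_i\,\cN_i(\rho) = \cN(\rho),
\end{equation}
so $\cN = \cR\circ\cT$ as required for an extension.

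For degradability I would first build an isometric extension of $\cT$. Choosing isometric extensions $U_i: A\rightarrow BE$ of each $\cN_i$ (embedding the individual environments in a common space $E$), I would set
\begin{equation}
U_\cT\ket{\psi} = \sum_i \sqrt{p_i}\,(U_i\ket{\psi})_{BE}\,\ket{i}_F\,\ket{i}_{F'},
\end{equation}
so that Bob holds $BF$ and the environment holds $EF'$. A short check---using $U_i^\dg U_i = I$ and the orthogonality of the flags---confirms that $U_\cT$ is an isometry, that $\Tr_{EF'}U_\cT\rho U_\cT^\dg = \cT(\rho)$, and that the complementary channel is $\widehat\cT(\rho) = \sum_i p_i\,\widehat{\cN}_i(\rho)\ox\proj{i}_{F'}$; the point is that tracing out either flag copy forces diagonality in $i$ and decouples the branches. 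I would then take the degrading map to be the flag-controlled operation that reads $F$ in the computational basis and applies $\cD_i$ to $B$ conditioned on outcome $i$ (relabeling $F\rightarrow F'$), i.e. $\cD(\sigma_B\ox\proj{i}_F) = \cD_i(\sigma_B)\ox\proj{i}_{F'}$. Since $\cD_i\circ\cN_i = \widehat{\cN}_i$ by the assumed degradability of each $\cN_i$, this yields $\cD\circ\cT = \widehat\cT$, so $\cT$ is degradable, and hence a degradable (in particular additive) extension of $\cN$.

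By Theorem~\ref{Thm:Main}, degradability of $\cT$ gives $Q(\cN)\leq Q^{(1)}(\cT)$, so it remains only to bound $Q^{(1)}(\cT)$. The main computation is to show that the coherent information of $\cT$ splits additively across the flag. Fixing a purified input with reference $R$ and writing $\sigma^{(i)}_{RB} = (I_R\ox\cN_i)(\proj{\phi}_{RA})$, the output is the classical--quantum state $\omega_{RBF} = \sum_i p_i\,\sigma^{(i)}_{RB}\ox\proj{i}_F$. Because the flag is orthogonal in $F$, both $S(\omega_{BF})$ and $S(\omega_{RBF})$ acquire the same $H(\{p_i\})$ term plus a $p_i$-weighted average of branch entropies, and these flag terms cancel in the difference:
\begin{equation}
I^c(\cT,\phi) = \sum_i p_i\big(S(\sigma^{(i)}_B) - S(\sigma^{(i)}_{RB})\big) = \sum_i p_i\,I^c(\cN_i,\phi).
\end{equation}
Maximizing over $\phi$ and using that the maximum of a sum is at most the sum of the maxima gives $Q^{(1)}(\cT) = \max_\phi\sum_i p_i\,I^c(\cN_i,\phi) \leq \sum_i p_i\,Q^{(1)}(\cN_i)$, which combined with $Q(\cN)\leq Q^{(1)}(\cT)$ proves the bound.

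I expect the main obstacle to be the second step: arranging the isometric extension so that one flag copy is delivered to Bob and an independent copy to the environment, and then verifying that the naive flag-controlled degrading map is genuinely CPTP and reproduces $\widehat\cT$ exactly. The entropy bookkeeping in the final step is routine once the classical--quantum structure of the output, and the resulting cancellation of the $H(\{p_i\})$ term, is recognized.
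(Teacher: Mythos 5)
Your proposal is correct and follows essentially the same route as the paper's proof: discard the flag to recover $\cN$, observe that the complementary channel of $\cT$ is the flagged collection of complementary channels $\sum_i p_i\,\widehat{\cN}_i\ox\proj{i}$ so that the flag-controlled map $\sum_i \cD_i\ox\proj{i}$ degrades $\cT$, and then use the cancellation of the $H(\{p_i\})$ terms to write $I^c(\cT,\phi)=\sum_i p_i I^c(\cN_i,\phi)\leq\sum_i p_i Q^{(1)}(\cN_i)$. The only difference is that you spell out the isometric extension with a flag copy in the environment, a detail the paper leaves implicit.
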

\begin{proof}
First, let $\cR$ be the partial trace on the flagging system so that
$\cN = \cR \circ \cT$.
To see that $\cT$ is degradable, note that the complementary channel of $\cT$ is
\begin{equation}
\widehat{\cT} = \sum_i \widehat{\cN_i} \ox \proj{i}
\end{equation}
and that letting
\begin{equation}
\cD = \sum_i \cD_i \ox \proj{i},
\end{equation}
where $\cD_i \cT_i = \widehat{\cT_i}$, we have $\cD\circ \cT = \widehat{\cT}$.

Finally, letting $\phi$ be the optimal input state for $\cT$, we find
\begin{eqnarray}
\nonumber Q^{(1)}(\cT) &=& S\left(\sum_i p_i \cN_i(\phi)\otimes \proj{i}\right) \\
\nonumber & & - S\left(\sum_i p_i \widehat{\cN}_i(\phi)\otimes \proj{i}\right)\\
\nonumber & = & \sum_i p_i \left( S\left(\cN_i(\phi) \right) - S\left(\widehat{\cN}_i(\phi)\right)\right)\\
\nonumber  & \leq & \sum_i p_i Q^{(1)}(\cN_i),
\end{eqnarray}
so that by Theorem 2 the result follows.
\end{proof}

\subsection{No-cloning bounds}

We next show that no-cloning bounds \cite{BDEFMS98,Cerf00} are a special case.

Suppose $\cN$ is antidegradable, meaning there is a channel $\cD$ such
that $\cD \circ \widehat{\cN} = \cN$.  In this case, we can define a
zero-capacity degradable extension of $\cN$ as follows.  Let $\cN$ have isometric extension
$U:A\rightarrow BE$, $d = \max(d_B,d_E)$, and $F_1$ and $F_2$ be
$d$-dimensional spaces with $B\subset F_1$ and $E\subset F_2$.  Then
define isometry $V:A\rightarrow F_1F_2C_1C_2$ as
\begin{equation}
V\ket{\phi} = \nonumber \frac{1}{\sqrt{2}}U\ket{\phi}\ket{01}_{C_1C_2}+\frac{1}{\sqrt{2}}({\rm SWAP_{F_1F_2}}U\ket{\phi})\ket{10}_{C_1C_2}
\end{equation}

This gives a degradable extension of $\cN$, $\cT(\rho) =
\Tr_{F_2C_2}V\rho V^\dg$, which can be degraded to $\cN$.

\subsection{Convexity of bounds}
We now show that if we have upper bounds for the capacity
of two channels, both obtained from a degradable extension, the convex
combination of the bounds is an upper bound for the capacity of the
corresponding convex combination of the channels.  More concretely,
suppose $\cT_0$ and $\cT_1$ are degradable extensions of $\cN_0$ and
$\cN_1$, respectively.  Then,
\begin{equation}
\cT = p\cT_0\ox\proj{0}+ (1-p)\cT_1\ox\proj{1}
\end{equation}
is a degradable extension of $\cN = p\cN_0 + (1-p)\cN_1$, and satisfies
\begin{eqnarray}
Q^{(1)}(\cT)&=& p I^c(\cT_0,\phi) + (1-p)I^c(\cT_1,\phi)\\
& \leq & p Q^{(1)}(\cT_0) + (1-p)Q^{(1)}(\cT_1).
\end{eqnarray}

\section{Bounds on Specific Channels}
In this section we will evaluate explicit upper bounds on the private classical and quantum capacities of 
the depolarizing channel and Pauli channels with independent amplitude and phase noise (which we also call ``the BB84 channel'', because
of its relevance for BB84).  In each case, 
we will use a flagged degradable extension of the channel of interest, based on the convex decomposition 
into degradable channels used in \cite{SSW06}\cite{S07}.  The advantage we obtain over this previous work is, essentially,
due to the fact that our upper bound involves a maximization of the average coherent informations of the elements of 
our decomposition, all with respect to the same state.  In \cite{SSW06}\cite{S07}, the corresponding bound is the average
of the individual maxima, allowing {\em different} reference states for each channel in the decomposition, which 
generally leads to a weaker bound.

Throughout this section, we will use the following special property of coherent information for
degradable channels, which was first proved in \cite{YDH05}.  It will
assist in the evaluation of coherent informations for specific
degradable extensions below.

\begin{lemma}
Let $\cN$ be degradable.  Then
\begin{equation}\nonumber
pI^c(\cN,\phi_0) + (1-p)I^c(\cN,\phi_1) \leq I^c(\cN,p\phi_0+(1-p)\phi_1).
\end{equation}
In other words, $I^c(\cN,\phi)$ is concave as a function of $\phi$.
\end{lemma}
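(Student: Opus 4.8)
The plan is to prove concavity of the coherent information $I^c(\cN,\phi)$ in $\phi$ for a degradable channel $\cN$ by exploiting the data-processing structure that degradability provides. Writing $\phi = p\phi_0 + (1-p)\phi_1$, I would first recall that $I^c(\cN,\phi) = S(\cN(\phi)) - S(\widehat{\cN}(\phi))$, where $\widehat{\cN}$ is the complementary channel. The claimed inequality is equivalent to showing that the functional $\phi \mapsto S(\cN(\phi)) - S(\widehat{\cN}(\phi))$ is concave, so the whole problem reduces to controlling how the two entropy terms behave under mixing. The term $S(\cN(\phi))$ alone is concave in $\phi$ because $S$ is concave and $\cN$ is linear, but $-S(\widehat{\cN}(\phi))$ is convex, so the two effects compete and I cannot simply add them; the subtracted term is exactly where degradability must be used.

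The cleanest route I would take is to introduce a classical flag register that records which of $\phi_0,\phi_1$ was selected, and to track the entropies of the $B$ and $E$ outputs jointly with this flag. Concretely, I would consider the classical-quantum state $\omega = p\,\proj{0}\ox\phi_0 + (1-p)\,\proj{1}\ox\phi_1$ and push it through the isometric extension $U$. Because conditioning on the flag fixes the input, $S(\cN(\phi)) - \sum_i p_i S(\cN(\phi_i))$ equals the Holevo-type mutual information between the flag and $B$, while $S(\widehat{\cN}(\phi)) - \sum_i p_i S(\widehat{\cN}(\phi_i))$ equals the analogous quantity between the flag and $E$. Thus the difference between the right-hand side and the left-hand side of the desired inequality is precisely $I(\text{flag};B) - I(\text{flag};E)$, evaluated on the classically-correlated ensemble.

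The main step — and the place where I expect the real work to sit — is then to argue $I(\text{flag};B) \geq I(\text{flag};E)$. This is where degradability enters decisively: since there is a degrading map $\cD$ with $\cD\circ\cN = \widehat{\cN}$, the $E$ system is obtained from the $B$ system by a quantum channel applied independently of the flag. The quantum data-processing inequality (monotonicity of mutual information under local operations on the unconditioned system) then gives $I(\text{flag};E) \leq I(\text{flag};B)$ immediately, since processing $B$ through $\cD$ can only decrease its correlation with the flag. I would need to verify carefully that $\cD$ acts only on the $B$ factor and commutes with the classical flag register, so that data processing applies cleanly; this bookkeeping is the only subtle point, and it is exactly what degradability guarantees.

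Assembling these pieces, the inequality $I(\text{flag};B) - I(\text{flag};E) \geq 0$ rearranges directly into $I^c(\cN,p\phi_0+(1-p)\phi_1) \geq p\,I^c(\cN,\phi_0) + (1-p)\,I^c(\cN,\phi_1)$, which is the assertion. The anticipated obstacle is not any single hard estimate but rather setting up the flagged state and identifying the two entropy differences with the correct conditional mutual informations; once that identification is made, monotonicity under the degrading map finishes the proof in one line.
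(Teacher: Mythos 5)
Your proposal is correct and follows essentially the same route as the paper: both introduce the classical flag state $p\proj{0}\ox\phi_0+(1-p)\proj{1}\ox\phi_1$, identify the concavity deficit with the difference of correlations between the flag and the $B$ versus $E$ outputs (the paper phrases this as $H(V|B)\leq H(V|E)$, you as $I(\mathrm{flag};B)\geq I(\mathrm{flag};E)$, which are equivalent up to the additive term $H(V)$), and close with monotonicity under the degrading map $\cD$ acting on $B$.
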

\begin{proof}
Writing out the entropies involved explicitly, what we would like 
to prove is that
\begin{eqnarray}
\nonumber 
& & pS\left(\cN(\phi_0)\right)+(1-p)S\left(\cN(\phi_0)\right)\\
& & -S\Bigl(p\cN(\phi_0)+(1-p)\cN(\phi_1)\Bigr)\nonumber\\
&\leq &  \nonumber\\
\nonumber  
& & pS\left(\widehat{\cN}(\phi_0)\right)+(1-p)S\left(\widehat{\cN}(\phi_0)\right) \\
& & -S\left(p\widehat{\cN}(\phi_0)+(1-p)\widehat{\cN}(\phi_1)\right),
\end{eqnarray}
which, letting $U$ be the isometric extension of $\cN$ and 
\begin{equation}
\rho_{VBE} = p \proj{0}_V \ox U\phi_0U^\dg + (1-p)\proj{1}_V\ox U\phi_1 U^\dg,
\end{equation}
is equivalent to 
\begin{equation}
H(V|B)_{\rho_{VB}} \leq H(V|E)_{\rho_{VE}}.
\end{equation}
Noting that $H(U|B)$ is nondecreasing under operations on $B$ (which is a simple consequence of the strong subadditivity of 
quantum entropy), and there is a $\cD$ that maps $B$ to $E$ completes the proof.
\end{proof}

We will also have use for $\cN_{(u,v)}$, the most general
degradable qubit channel (up to unitary operations on the input 
and output) \cite{CRS08}.  $\cN_{(u,v)}$ has Kraus operators
\begin{eqnarray}
A_+ &=& \left( \begin{matrix} \cos(\frac{1}{2}(v-u)) & 0\\ 0 &
 \cos(\frac{1}{2}(v+u)) \end{matrix}\right) \\ A_- &=& \left(
 \begin{matrix} 0 & \sin(\frac{1}{2}(v+u))\\ \sin(\frac{1}{2}(v-u)) &
 0 \end{matrix}\right).
\end{eqnarray}
In \cite{WP06}, $\cN_{(u,v)}$ was shown to be degradable when 
$|\sin{v}| \leq |\cos{u}|$.

\subsection{Depolarizing Channel}
The depolarizing channel of error probability $p$ is given by
\begin{equation}\nonumber
\cN_p(\rho) = (1-p)\rho + \frac{p}{3} X\rho X + \frac{p}{3} Y\rho Y + \frac{p}{3} Z\rho Z \ .
\end{equation}
It is particularly nice to study since it has the property that for
any unitary $U$
\begin{equation}
\cN_p(U\rho U^\dag)= U \cN_p(\rho) U^\dag\ .
\end{equation}

The following theorem, together with the subsequent corollary, provides the strongest upper bounds to date on the
capacity of the depolarizing channel.  We provide a proof of the theorem below, after establishing two essential lemmas.
\begin{theorem}
The capacity of the depolarizing channel with error probability $p$ satisfies
\begin{equation}
Q(\cN_p) \leq {\rm co}\left [ \D (p), 1-4p\right],
\end{equation}
where
\begin{equation}\nonumber
\D(p)=\min H\left[\frac{1}{2}[1{+}\sin u\sin v]\right]{-}H\left[\frac{1}{2}[1{+}\cos u\cos v]\right],
\end{equation}
with the minimization over $(u,v)$ such that $\cos^2(u/2)\cos^2(v/2) = 1-p$, and ${\rm co}[f_1(p),f_2(p)\ldots f_n(p)]$ denotes the maximal convex function that is less than or equal to all $f_i(p)\ , i=1 \ldots n$.
\label{TH6}
\end{theorem}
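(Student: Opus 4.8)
The plan is to realize $\cN_p$ as the average channel of a flagged degradable extension built from the optimal degradable qubit channels $\cN_{(u,v)}$, and then to sharpen the resulting bound by combining it with the no-cloning bound, using the affine dependence of $\cN_p$ on $p$. Concretely, I would first fix $(u,v)$ in the degradable region $|\sin v|\le|\cos u|$ and conjugate $\cN_{(u,v)}$ by each element $U_g$ of the single-qubit Clifford group $G$, obtaining channels $\cN_g=U_g\cN_{(u,v)}(U_g^\dagger\cdot U_g)U_g^\dagger$ that remain degradable because degradability survives unitary pre- and post-processing. Averaging over $G$ kills the off-diagonal process-matrix entries and symmetrizes the three Pauli weights while leaving the identity weight fixed at $\cos^2(u/2)\cos^2(v/2)$, so $\frac1{|G|}\sum_g\cN_g$ is depolarizing; imposing $\cos^2(u/2)\cos^2(v/2)=1-p$ makes it equal $\cN_p$. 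By Lemma~\ref{Lem:Flagged} the flagged channel $\cT=\sum_g\frac1{|G|}\cN_g\ox\proj{g}$ is then a degradable extension of $\cN_p$, and Theorem~\ref{Thm:Main} gives $Q(\cN_p)\le Q^{(1)}(\cT)$.

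The crux is evaluating $Q^{(1)}(\cT)=\max_\phi\frac1{|G|}\sum_g I^c(\cN_g,\phi)$. Since $I^c(\cN_g,\phi)=I^c(\cN_{(u,v)},U_g^\dagger\phi U_g)$, the objective is the average of the single degradable channel's coherent information over a Clifford orbit of inputs, and here the concavity lemma proved above is essential: Jensen's inequality gives $\frac1{|G|}\sum_g I^c(\cN_{(u,v)},U_g^\dagger\phi U_g)\le I^c(\cN_{(u,v)},\frac1{|G|}\sum_g U_g^\dagger\phi U_g)=I^c(\cN_{(u,v)},I/2)$, because the Clifford twirl sends every input to $I/2$. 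The maximum is therefore attained at $\phi=I/2$, and computing the two output spectra there gives $Q^{(1)}(\cT)=H[\tfrac12(1+\sin u\sin v)]-H[\tfrac12(1+\cos u\cos v)]$; minimizing over the admissible $(u,v)$ yields $Q(\cN_p)\le\D(p)$. I expect this collapse of the optimal input to $I/2$ to be the main obstacle, since it is precisely the step where the common-reference-state formula improves on the older bound $\frac1{|G|}\sum_g Q^{(1)}(\cN_g)=Q^{(1)}(\cN_{(u,v)})$, whose maximizer need not be $I/2$.

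The second ingredient is the no-cloning bound: at $p=1/4$ the channel $\cN_{1/4}$ is antidegradable and hence admits the zero-capacity degradable extension of the previous section, giving $Q(\cN_{1/4})=0$. To combine everything I would exploit the affine dependence $\cN_{\alpha p_0+(1-\alpha)p_1}=\alpha\cN_{p_0}+(1-\alpha)\cN_{p_1}$: since covariance makes $I/2$ simultaneously optimal for every depolarizing channel, the convexity-of-bounds argument applies with this one reference state and shows that a convex combination of valid bounds at $p_0$ and $p_1$ is a valid bound at the intermediate point. In particular, interpolating $Q(\cN_0)\le1$ with $Q(\cN_{1/4})\le0$ already gives the line $1-4p$, and interpolating freely among all available bounds shows that the largest convex function lying below both $\D(p)$ and $1-4p$ upper-bounds $Q(\cN_p)$. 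That function is precisely ${\rm co}[\D(p),1-4p]$, which completes the proof.
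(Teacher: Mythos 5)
Your proposal is correct and follows essentially the same route as the paper: a Clifford-flagged degradable extension of $\cN_{(u,v)}$, reduction of the optimal input to $I/2$ via the concavity lemma, identification of the induced depolarizing parameter $1-\cos^2(u/2)\cos^2(v/2)$ by the twirling argument, and convex combination with the $p=1/4$ no-cloning bound. The only (harmless) variation is that you apply concavity to $\cN_{(u,v)}$ under the full Clifford orbit of inputs, whereas the paper establishes Pauli covariance of $\cT^{\rm dep}_{(u,v)}$ and applies concavity to $\cT^{\rm dep}_{(u,v)}$ itself.
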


\begin{corollary}
\begin{equation}\nonumber
Q(\cN_p) \leq {\rm co}\left[\!1-H(p),H(\frac{1-\gamma(p)}{2})-H(\frac{\gamma(p)}{2}), 1-4p\!\right],
\end{equation}
where $\gamma(p) =  4 \sqrt{1-p}(1 - \sqrt{1-p})$.  \label{C7}
\end{corollary}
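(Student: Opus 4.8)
The plan is to derive the corollary from Theorem~\ref{TH6} by evaluating the minimization defining $\D(p)$ at two convenient points on the constraint curve $\cos^2(u/2)\cos^2(v/2)=1-p$. Since $\D(p)$ is a \emph{minimum} over all admissible $(u,v)$, every pair meeting the constraint furnishes an upper bound on $\D(p)$, and I would exhibit two such pairs that reproduce exactly the two explicit entropy expressions appearing in the corollary.

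First I would take $u=0$ (equivalently $v=0$), so the constraint forces $\cos^2(v/2)=1-p$. Then $\sin u\sin v=0$, making the first entropy term $H[1/2]=1$, while $\cos u\cos v=\cos v=2(1-p)-1=1-2p$, so the second term equals $H[1-p]=H(p)$. This gives $\D(p)\le 1-H(p)$, the hashing bound. Next I would take the symmetric point $u=v$, so that $\cos^4(u/2)=1-p$, i.e.\ $\cos^2(u/2)=\sqrt{1-p}$. Writing $s=\sqrt{1-p}$ and using the half-angle identities, one finds $\sin u\sin v=\sin^2 u=4s(1-s)=\gamma(p)$ and $\cos u\cos v=\cos^2 u=(2s-1)^2$, which rearranges to $\tfrac12(1+\cos u\cos v)=1-\tfrac12\gamma(p)$. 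By the symmetry $H[x]=H[1-x]$ the two terms become $H(\tfrac{1-\gamma(p)}{2})$ and $H(\tfrac{\gamma(p)}{2})$, so this choice gives $\D(p)\le H(\tfrac{1-\gamma(p)}{2})-H(\tfrac{\gamma(p)}{2})$.

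To conclude, I would invoke the elementary fact that the lower convex envelope of a family of functions equals the envelope of their pointwise minimum and is monotone in that minimum. Since $\D(p)$ lies below both $1-H(p)$ and $H(\tfrac{1-\gamma(p)}{2})-H(\tfrac{\gamma(p)}{2})$, we have $\min\{\D(p),1-4p\}\le\min\{1-H(p),H(\tfrac{1-\gamma(p)}{2})-H(\tfrac{\gamma(p)}{2}),1-4p\}$ pointwise, and applying ${\rm co}[\,\cdot\,]$ to both sides yields ${\rm co}[\D(p),1-4p]\le{\rm co}[1-H(p),H(\tfrac{1-\gamma(p)}{2})-H(\tfrac{\gamma(p)}{2}),1-4p]$. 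Chaining this with the bound $Q(\cN_p)\le{\rm co}[\D(p),1-4p]$ of Theorem~\ref{TH6} gives precisely the claimed inequality; the corollary is thus the explicit, computable relaxation of the theorem obtained by fixing two natural reference pairs rather than minimizing over all of them.

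I expect the main obstacle to be purely computational: verifying that the symmetric choice $u=v$ collapses cleanly to the stated $\gamma(p)$ formula requires careful bookkeeping with the half-angle identities, and one must check that both chosen pairs genuinely satisfy the constraint with real $(u,v)$ over the full range of $p$ at issue. The convex-envelope step is conceptually routine once the two point evaluations are in hand.
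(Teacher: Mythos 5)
Your proposal is correct and follows exactly the paper's (much terser) argument: the paper likewise obtains the two explicit terms by specializing the minimization in $\D(p)$ to the $(u,v)$ pairs corresponding to dephasing and amplitude damping and then uses that the minimum is bounded by particular cases, together with monotonicity of the convex hull. Your explicit half-angle computations and the check that $u=v$ yields $\gamma(p)=4\sqrt{1-p}(1-\sqrt{1-p})$ are accurate and simply fill in details the paper omits.
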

\begin{proof}
Corollary \ref{C7} follows from the theorem by noting that the first
two terms inside the square brackets are special cases of $\Delta$ for 
values of $(u,v)$ corresponding to amplitude damping and to dephasing
channels, and using the fact that the true minimum is always bounded
by particular cases.
\end{proof}

To establish Theorem \ref{TH6} we will use the following flagged
degradable extension of the depolarizing channel:
\begin{equation} \label{Eq:UVDegradableExtension}
\cT^{\rm dep}_{(u,v)}(\rho) = \frac{1}{|\cC|}\sum_{c\in \cC} 
c^\dg\cN_{(u,v)}(c\rho c^\dg)c \ox \proj{c},
\end{equation}
where $\cal C$ is the set of unitaries which map
$\{I,X,Y,Z\}\rightarrow \{I,X,Y,Z\}$ under conjugation (the Clifford group).

\begin{lemma}\label{Lem:EvalQ1}
\begin{equation}\nonumber
Q^{(1)}(\cT^{\rm dep}_{(u,v)}){=}H\left[\frac{1}{2}[1{+}\sin u \sin v]\right]{-}H\left[\frac{1}{2}[1{+}\cos u \cos v]\right]
\end{equation}
\end{lemma}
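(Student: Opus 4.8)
The plan is to compute $Q^{(1)}(\cT^{\rm dep}_{(u,v)})$ directly from its definition as a maximization of the coherent information over input states, exploiting the heavy symmetry built into the Clifford-averaged flagged extension. First I would observe that, because $\cT^{\rm dep}_{(u,v)}$ is a flagged degradable extension (each branch $c^\dg \cN_{(u,v)}(c\,\cdot\,c^\dg)c$ is a Clifford-conjugate of the degradable channel $\cN_{(u,v)}$, hence itself degradable), Lemma~\ref{Lem:Flagged} guarantees it is degradable and Theorem~\ref{Thm:Main} applies. More importantly, the preceding concavity lemma (coherent information of a degradable channel is concave in the input state) combined with the explicit Clifford covariance lets me argue that the maximizing input is the maximally mixed state. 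The key symmetry step is this: since $\cC$ acts as the full Clifford group and the flag records $c$, the channel $\cT^{\rm dep}_{(u,v)}$ is covariant under the Clifford group in the sense that precomposing the input by any $c'\in\cC$ merely permutes the branches; averaging an optimal input over $\cC$ therefore does not decrease the coherent information (by concavity), and the Clifford orbit average of any state is $I/2$. So the optimum is attained at $\phi = I/2$, whose purification is the maximally entangled state.

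Next I would evaluate the two entropies at $\phi = I/2$. The output entropy term $S(\cT^{\rm dep}_{(u,v)}(\phi))$ and the complementary output entropy $S(\widehat{\cT^{\rm dep}_{(u,v)}}(\phi))$ both split over the flag as in the computation in Lemma~\ref{Lem:Flagged}: the flag is classical and orthogonal, so each contributes $\tfrac{1}{|\cC|}\sum_c$ of the corresponding branch entropy plus the (common, hence cancelling) flag entropy $H(\{1/|\cC|\})$. Because every branch is a unitary conjugate of $\cN_{(u,v)}$ acting on the Clifford-symmetric input $I/2$, all branch entropies are equal, and the sum collapses to a single term: $Q^{(1)} = S\big(\cN_{(u,v)}(I/2)\big) - S\big(\widehat{\cN}_{(u,v)}(I/2)\big)$. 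What remains is the routine but essential calculation of these two single-channel entropies from the Kraus operators $A_\pm$. Applying $\cN_{(u,v)}$ to $I/2$ gives a diagonal output state with eigenvalues $\tfrac{1}{2}(1 \pm \cos u\cos v)$, so $S(\cN_{(u,v)}(I/2)) = H[\tfrac12(1+\cos u\cos v)]$; computing the complementary channel $\widehat{\cN}_{(u,v)}$ (the $E$-marginal of $U\,(I/2)\,U^\dg$) yields eigenvalues $\tfrac{1}{2}(1 \pm \sin u\sin v)$ and hence $S(\widehat{\cN}_{(u,v)}(I/2)) = H[\tfrac12(1+\sin u\sin v)]$, producing exactly the claimed expression.

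The main obstacle is justifying rigorously that $I/2$ is optimal rather than merely a symmetric candidate. Covariance alone gives that the \emph{set} of optimizers is Clifford-invariant, and concavity of $I^c(\cdot,\phi)$ (the lemma proved just above) gives that averaging optimizers over the Clifford orbit yields another optimizer whose reduced input is $I/2$; together these pin the optimum to the maximally mixed input. I would need to verify carefully that the flag in \eqref{Eq:UVDegradableExtension} genuinely makes the channel Clifford-covariant in the required sense — that conjugating the input by $c'$ relabels the branches $c \mapsto c c'$ (using the group structure of $\cC$) without altering the branch weights $1/|\cC|$ — so that the coherent information is exactly invariant under $\phi \mapsto c'\phi c'^\dg$. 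Once that covariance is established, the concavity lemma does the rest, and the remaining entropy computations are a direct diagonalization that I would not belabor here.
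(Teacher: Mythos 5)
Your proposal follows essentially the same route as the paper: use the Clifford/Pauli covariance of $\cT^{\rm dep}_{(u,v)}$ together with the concavity of $I^c$ for degradable channels (the preceding lemma) to pin the optimizer to $\phi=I/2$, then note that the classical flag makes the branch entropies split and all branches contribute equally, reducing the answer to $S(\cN_{(u,v)}(I/2))-S(\widehat{\cN}_{(u,v)}(I/2))$. The paper implements the covariance step by averaging only over the Pauli subgroup $\{I,X,Y,Z\}$, which already twirls any qubit state to $I/2$; your full-Clifford average works just as well, and your observation that precomposition by $c'$ permutes the flag labels $c\mapsto cc'$ (together with a harmless output conjugation by $c'$) is exactly the mechanism the paper uses with $c'=X,Y,Z$.

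The one concrete error is in the final entropy evaluation: you have the two spectra interchanged. From the Kraus operators, the diagonal entries of $\cN_{(u,v)}(I/2)=\frac12(A_+A_+^\dg+A_-A_-^\dg)$ are $\frac12\bigl[\cos^2(\frac{v-u}{2})+\sin^2(\frac{v+u}{2})\bigr]=\frac12(1+\sin u\sin v)$ and its complement, so $S(\cN_{(u,v)}(I/2))=H\bigl[\frac12(1+\sin u\sin v)\bigr]$; correspondingly the complementary output (the Gram matrix $\Tr(A_iA_j^\dg)/2$) has eigenvalues $\frac12(1\pm\cos u\cos v)$, giving $S(\widehat{\cN}_{(u,v)}(I/2))=H\bigl[\frac12(1+\cos u\cos v)\bigr]$. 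As a sanity check, at $u=v=\cos^{-1}\sqrt{1-\gamma}$ (amplitude damping) the direct output of $I/2$ has eigenvalues $\frac{1\pm\gamma}{2}$ and indeed $\sin u\sin v=\gamma$. With the assignments as you wrote them, your formula $S(\cN(I/2))-S(\widehat{\cN}(I/2))$ would evaluate to the \emph{negative} of the claimed expression; swapping them back yields the lemma.
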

\begin{proof}
The main step is to show that the coherent information of $\cT^{\rm dep}_{(u,v)}$ is maximized by the maximally mixed state.  To see this, first note that for any $\phi$, we have
\begin{eqnarray}
& & (X\ox I) \cT^{\rm dep}_{(u,v)}(X\phi X) (X\ox I)  \\ 
\nonumber && = \frac{1}{|{\cal C}|}\sum_{c\in \cC} Xc^\dg\cN_{(u,v)}(cX\phi Xc^\dg)cX \ox \proj{c}\\
\nonumber && = \frac{1}{|{\cal C}|}\sum_{c\in \cC} Xc^\dg\cN_{(u,v)}(cX\phi Xc^\dg)cX \ox V\proj{cX}V^\dg \\
\nonumber && =  (I \ox V) \cT^{\rm dep}_{(u,v)}(\phi) (I\ox V^\dg),
\end{eqnarray}
where we have chosen unitary $V$ such that $V\ket{cX} = \ket{c}$.  Since
\begin{equation}
{\widehat{\cT}^{\rm dep}_{(u,v)}}(\phi) =  \frac{1}{|{\cal C}|}\sum_{c\in \cC} c^\dg\widehat{\cN}_{(u,v)}(c\phi c^\dg)c \ox \proj{c},
\end{equation}
by an identical argument we also get that
\begin{equation}
{\widehat{\cT}^{\rm dep}_{(u,v)}}(X\phi X) = (X\ox V) {\widehat{\cT}^{\rm dep}_{(u,v)}}(\phi )(X\ox V^\dg).
\end{equation}
Thus, we have
\begin{eqnarray}
S\left(\cT^{\rm dep}_{(u,v)}(\phi)\right) &=& S\left(\cT^{\rm dep}_{(u,v)}(X\phi X) \right)\\
S\left(\widehat{\cT}^{\rm dep}_{(u,v)}(\phi)\right) &=& S\left(\widehat{\cT}^{\rm dep}_{(u,v)}(X\phi X) \right),
\end{eqnarray}
so that $I^c(\cT^{\rm dep}_{(u,v)},\phi) = I^c(\cT^{\rm dep}_{(u,v)},X\phi X)$, and similarly for $Y$ and $Z$.  Using the concavity of $I^c(\cT^{\rm dep}_{(u,v)},\phi)$ in $\phi$, this gives us
\begin{eqnarray}
&& I^c(\cT^{\rm dep}_{(u,v)},\phi)  \\
& = &\frac{1}{4}\sum_{P\in \cP}I^c(\cT^{\rm dep}_{(u,v)},P\phi P^\dg)\\
& \leq & I^c\left(\cT^{\rm dep}_{(u,v)},\frac{1}{4}\sum_{P\in \cP}P\phi P^\dg\right)\\
& = & I^c\left(\cT^{\rm dep}_{(u,v)},\frac{1}{2} I\right),
\end{eqnarray}
where we have let $\cP = \{ I,X,Y,Z\}$ denote the Pauli matrices.  
This shows that the maximum coherent information is achieved for the reference state $I/2$, where its value is
\begin{eqnarray}
\nonumber I^c\left(\cT^{\rm dep}_{(u,v)},\frac{1}{2} I\right) & = & S(\cN_{(u,v)}(I/2))- S(\widehat{\cN}_{(u,v)}(I/2))\\
\nonumber  & = & H\left(\frac{1}{2}[1 + \sin u \sin v]\right)\\
 & &  - H\left(\frac{1}{2}[1 + \cos u \cos v]\right).
\end{eqnarray}
\end{proof}

The following lemma shows that $\cT^{\rm dep}_{(u,v)}$ can be degraded to a depolarizing channel, and computes the
error probability of that channel as a function of $u$ and $v$.

\begin{lemma}\label{Lem:UVErrorProb}

\begin{equation}
\Tr_2 \cT^{\rm dep}_{(u,v)}(\rho) = \cN_{p(u,v)}(\rho),
\end{equation}
where $p(u,v) = 1 - \cos^2(u/2)\cos^2(v/2)$.

\end{lemma}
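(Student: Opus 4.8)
The plan is to trace out the flag register in the definition \eqref{Eq:UVDegradableExtension} of $\cT^{\rm dep}_{(u,v)}$, recognize the result as a Clifford twirl of $\cN_{(u,v)}$, argue that any such twirl must be a depolarizing channel, and then pin down the error probability by computing a twirl-invariant quantity. First I would note that $\Tr_2 \proj{c} = 1$, so tracing out the second system gives
\begin{equation}\nonumber
\Tr_2\cT^{\rm dep}_{(u,v)}(\rho) = \frac{1}{|\cC|}\sum_{c\in\cC}c^\dg\cN_{(u,v)}(c\rho c^\dg)c \equiv \cN_{\rm tw}(\rho),
\end{equation}
the average of $\cN_{(u,v)}$ conjugated over the entire Clifford group $\cC$.

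Next I would show that $\cN_{\rm tw}$ is depolarizing. Reindexing the sum by $c' = cc_0$ for any fixed $c_0\in\cC$ shows at once that $\cN_{\rm tw}(c_0\rho c_0^\dg) = c_0\cN_{\rm tw}(\rho)c_0^\dg$; that is, $\cN_{\rm tw}$ is covariant under conjugation by every Clifford unitary. Passing to the Bloch (Pauli transfer) representation, an affine qubit map $\vec r \mapsto T\vec r + \vec t$ is Clifford-covariant iff $T$ commutes with the induced action of $\cC$ on $\mathrm{span}\{X,Y,Z\}$ and $\vec t$ is fixed by it. Since that action is the octahedral rotation group, which is irreducible on the three-dimensional space $\mathrm{span}\{X,Y,Z\}$, Schur's lemma forces $T = \lambda\, I_3$ and $\vec t = 0$; hence $\cN_{\rm tw}$ has Bloch matrix $\mathrm{diag}(1,\lambda,\lambda,\lambda)$, which is exactly a depolarizing channel. (Equivalently, one may invoke that the Clifford group is a unitary $2$-design, so the Clifford twirl equals the Haar twirl, whose output is manifestly depolarizing.)

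It then remains only to identify the parameter, for which the cleanest invariant is the entanglement fidelity $F_e(\cE) = \frac{1}{d^2}\sum_k|\Tr A_k|^2$. Conjugation $\cE \mapsto c^\dg\cE(c\cdot c^\dg)c$ replaces each Kraus operator $A_k$ by $c^\dg A_k c$, which has the same trace, so $F_e$ is unchanged under twirling and $F_e(\cN_{\rm tw}) = F_e(\cN_{(u,v)})$. For a depolarizing channel $\cN_p$ one has $F_e = 1-p$, whereas the Kraus operators of $\cN_{(u,v)}$ give $\Tr A_+ = 2\cos(u/2)\cos(v/2)$ and $\Tr A_- = 0$, so $F_e(\cN_{(u,v)}) = \cos^2(u/2)\cos^2(v/2)$. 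Equating the two yields $p(u,v) = 1 - \cos^2(u/2)\cos^2(v/2)$, as claimed.

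I expect the main obstacle to be the second step---rigorously concluding that the twirl is depolarizing rather than merely Pauli-diagonal. The covariance identity is immediate from reindexing, but deducing the isotropic form requires the representation-theoretic fact that $\cC$ acts irreducibly on $\mathrm{span}\{X,Y,Z\}$ (so that the three Bloch eigenvalues collapse to a single $\lambda$ and the translation part vanishes). Once that structural fact is in place, the parameter computation via entanglement fidelity is entirely routine.
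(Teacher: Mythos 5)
Your proof is correct and follows essentially the same route as the paper: trace out the flag to expose the Clifford twirl of $\cN_{(u,v)}$, observe that the twirl is a depolarizing channel, and match entanglement fidelities to get $1-p = \left(\tfrac{1}{2}\Tr A_+\right)^2 = \cos^2(u/2)\cos^2(v/2)$. The only difference is that the paper simply cites \cite{BDSW96} for the fact that the Clifford twirl of any qubit channel is depolarizing with the same entanglement fidelity, whereas you supply a self-contained covariance/Schur's-lemma argument for that structural step.
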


\begin{proof}
First, note that for any channel $\cN$, it was shown in \cite{BDSW96} that 

\begin{equation}
\widetilde{\cN}(\rho ) = \frac{1}{|{\cal C}|}\sum_{c\in {\cal C}} c^\dg \cN(c \rho c^\dg)c
\end{equation}

is a depolarizing channel with the same entanglement fidelity as $\cN$.  In other words, 

\begin{equation}
\bra{\phi^+} (I \ox \cN) (\proj{\phi^+}) \ket{\phi^+} = \bra{\phi^+} (I \ox \widetilde{\cN}) (\proj{\phi^+}) \ket{\phi^+},
\end{equation}
where $\ket{\phi^+} = (\frac{1}{\sqrt{2}})(\ket{00}+\ket{11})$.  As a result, letting $\cN_{p(u,v)} = \widetilde{\cN}_{(u,v)}$ define $p(u,v)$,
and using the fact that $1-p(u,v) = \bra{\phi^+} (I \ox \cN_{p(u,v)})(\proj{\phi^+})\ket{\phi^+}$, we have
\begin{eqnarray}
1-p(u,v)&= & \bra{\phi^+} (I \ox \cN_{(u,v)}) (\proj{\phi^+})\ket{\phi^+} \nonumber \\
& = & \bra{\phi^+} I \ox A_+ \proj{\phi^+} I \ox A_+^\dg \ket{\phi^+}\nonumber\\
\nonumber  & &  +  \bra{\phi^+} I \ox A_- \proj{\phi^+} I \ox A_-^\dg \ket{\phi^+}.
\end{eqnarray}
Using the fact that 
\begin{equation}\nonumber
 \bra{\phi^+} I \ox A \proj{\phi^+} I \ox A^\dg \ket{\phi^+} = \left( \frac{1}{2}\Tr A\right)\left(\frac{1}{2}\Tr A^\dg\right)
 \end{equation}
and $A_-$ is traceless then gives
\begin{eqnarray}
1-p(u,v) &=& \left( \frac{1}{2}\Tr A_+\right)^2\\
\nonumber & =  & \left( \frac{1}{2}\left(\cos((v-u)/2) + \cos((v+u)/2) \right)\right)^2\\
\nonumber & = & \cos^2(v/2)\cos^2(u/2).
\end{eqnarray}
\end{proof}

\begin{proof}{\em (of Theorem \ref{TH6})}
Let $1 - p(u,v) = \cos^2(u/2)\cos^2(v/2)$, and $\cT^{\rm dep}_{(u,v)}$ be the 
channel described in Eq.~(\ref{Eq:UVDegradableExtension}).
Then, by Lemma \ref{Lem:Flagged}, $\cT^{\rm dep}_{(u,v)}(\rho)$ is degradable, and by  
Lemma \ref{Lem:UVErrorProb} tracing over the flag system degrades $\cT^{\rm dep}_{(u,v)}$ to $\cN_{p(u,v)}$.  Thus, 
$\cT^{\rm dep}_{(u,v)}$ is a degradable extension of $\cN_{p(u,v)}$.  As a result, by Theorem \ref{Thm:Main} we have
\begin{equation}
Q(\cN_{p(u,v)}) \leq Q^{(1)}(\cT^{\rm dep}_{(u,v)}),
\end{equation}
whereas by Lemma \ref{Lem:EvalQ1}  
\begin{equation}\nonumber
Q^{(1)}(\cT^{\rm dep}_{(u,v)}){=}H\left[\frac{1}{2}[1{+}\sin u \sin v]\right]{-}H\left[\frac{1}{2}[1{+} \cos u \cos v]\right].
\end{equation}

Furthermore, it was shown in \cite{BDEFMS98} that $\cN_p$ becomes anti-degradable when $p=1/4$, so that $Q(\cN_{1/4}) = 0$.

Since both of these bounds are the result of arguments via a degradable extension, their convex hull is also an upper bound for 
the capacity of $\cN_{p}$, which completes the proof.
\end{proof}

\begin{figure}[htbp]
\includegraphics[width=3.00in]{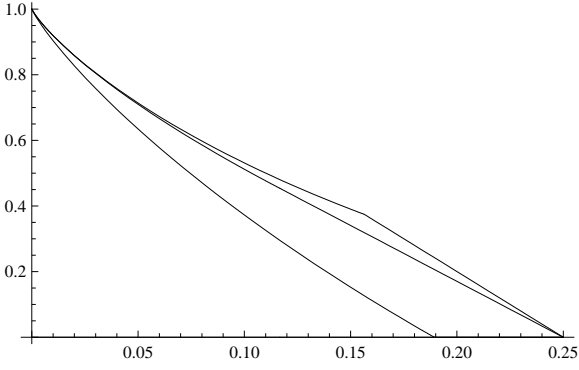}
\caption{Bounds on the quantum capacity of the depolarizing channel with error probability $p$.  The horizontal
axis is the error probability, and the vertical axis is the rate.  The lowest
line is the achivable rate using hashing \protect\cite{BDSW96}.  The top line
is the minimum of $1-H(p)$ \protect\cite{Rains-PPT} and 
$1-4p$ \protect\cite{BDSW96}.  The middle line is our new bound from
Corollary \protect\ref{C7}.  While the bound from  Theorem \protect\ref{TH6}
is tighter everywhere, it, the bound plotted, and the bound from
\protect\cite{SSW06} all look essentially identical on this scale.
}
\end{figure}

\subsection{The BB84 Channel}
Bennett-Brassard quantum key distribution \cite{BB84} is the most
widely studied and practically applied form of quantum cryptography.
A simple bound on the achievable key rate is therefore quite useful.
Here we evaluate the coherent information sharable though a degradable
extension of the BB84 channel, which also bounds the secret key rate
of this protocol.

Define the following degradable channel
\begin{equation}\nonumber
\cT^{\rm BB84}_q = \frac{1}{2}\cN^{\rm ad}_{\gamma(q)}(\rho )\ox \proj{0} +
\frac{1}{2}Y\cN^{\rm ad}_{\gamma(q)}(Y\rho Y)Y\ox \proj{1},
\end{equation}
with $\gamma(q) = 4q(1-q)$, and $\cN^{\rm ad}_{\gamma}$ the amplitude damping channel with Kraus
operators 
\begin{eqnarray}
A_0 &=& \left(\begin{matrix} 1 & 0 \\ 0 & \sqrt{1-\gamma}\end{matrix}\right)\\
A_1 &=& \left(\begin{matrix} 0 & \sqrt{\gamma} \\ 0 & 0\end{matrix}\right).
\end{eqnarray}
The channel $\cN^{\rm ad}_\gamma$ is a special case of $\cN_{(u,v)}$ with $u=v=\cos^{-1}(\sqrt{1-\gamma})$, which is degradable as long
as $\gamma \leq 1/2$ \cite{GF04}.

By tracing out the flag system, we find that
\begin{eqnarray}
\nonumber\Tr_2 \cT^{\rm BB84}_{q}(\rho)&=&(1-q)^2\rho + q(1-q)X\rho X\\
\nonumber && + q^2Z\rho Z + q(1-q) Y\rho Y.
\end{eqnarray}
With suitable unitary rotations on the input and output, this can be transformed to 
\begin{eqnarray}
\cN^{\rm BB84}_{q}(\rho)&\equiv& (1-q)^2\rho + q(1-q)X\rho X\\
 & & + q(1-q)Z\rho Z + q^2 Y\rho Y.
\end{eqnarray}
The channel $\cN^{\rm BB84}_q$ is such that its private classical capacity is equal to the maximal achievable 
key rate in BB84 with one-way postprocessing and quantum bit error rate $q$.  Since $\cT^{\rm BB84}_q$ is a degradable
extension of an equivalent channel, its coherent information provides an upper bound on the key rate of this protocol.

The coherent information of $\cT^{\rm BB84}_q$ can readily be evaluated, giving the
following upper bound.

\begin{lemma}
\begin{equation}
C_p(\cN^{\rm BB84}_q)\leq H\left(\frac{1}{2} -2q(1-q)\right) - H(2q(1-q))
\end{equation}
\end{lemma}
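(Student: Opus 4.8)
The plan is to invoke Theorem~\ref{Thm:Main}. First I would confirm that $\cT^{\rm BB84}_q$ really is a degradable extension of a channel unitarily equivalent to $\cN^{\rm BB84}_q$: it is a flagged combination, in the sense of Lemma~\ref{Lem:Flagged}, of the two branches $\cN^{\rm ad}_{\gamma(q)}$ and $Y\cN^{\rm ad}_{\gamma(q)}(Y\cdot Y)Y$, each unitarily equivalent to the amplitude damping channel and hence degradable whenever $\gamma(q)=4q(1-q)\le 1/2$; tracing out the flag yields the Pauli channel displayed in the excerpt, which is $\cN^{\rm BB84}_q$ up to the recorded input/output rotations. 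Theorem~\ref{Thm:Main} then gives $C_p(\cN^{\rm BB84}_q)\le Q^{(1)}(\cT^{\rm BB84}_q)$, so the entire lemma reduces to evaluating $Q^{(1)}(\cT^{\rm BB84}_q)=\max_\phi I^c(\cT^{\rm BB84}_q,\phi)$.

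Next I would use the symmetry of the two branches under $Y$ to collapse the optimization to a one-parameter family. The key observation is that $\cT^{\rm BB84}_q(Y\phi Y)$ is obtained from $\cT^{\rm BB84}_q(\phi)$ by conjugating the output qubit with $Y$ and swapping the two flags $\proj{0}\leftrightarrow\proj{1}$, and the identical manipulation relates $\widehat{\cT}^{\rm BB84}_q(Y\phi Y)$ to $\widehat{\cT}^{\rm BB84}_q(\phi)$. Since these are unitaries on the output systems they preserve all entropies, giving $I^c(\cT^{\rm BB84}_q,\phi)=I^c(\cT^{\rm BB84}_q,Y\phi Y)$. Appealing to concavity of the coherent information for degradable channels (the lemma proved just above), the maximum is then attained on a $Y$-invariant input $\phi_r=\tfrac12(I+rY)$ with $r\in[-1,1]$.

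I would then evaluate the coherent information on this family. Because $Y\phi_r Y=\phi_r$, the two flagged branches of both $\cT^{\rm BB84}_q$ and $\widehat{\cT}^{\rm BB84}_q$ carry identical spectra, so the flag contributes exactly one bit to each of $S(\cT^{\rm BB84}_q(\phi_r))$ and $S(\widehat{\cT}^{\rm BB84}_q(\phi_r))$; these cancel, leaving $I^c(\cT^{\rm BB84}_q,\phi_r)=I^c(\cN^{\rm ad}_{\gamma},\phi_r)$. A direct diagonalization of $\cN^{\rm ad}_\gamma(\phi_r)$ and of its complement then gives
\begin{equation}\nonumber
I^c(\cT^{\rm BB84}_q,\phi_r)=H\!\left(\tfrac12-\tfrac12\sqrt{\gamma^2+r^2(1-\gamma)}\right)-H\!\left(\tfrac12-\tfrac12\sqrt{(1-\gamma)^2+\gamma r^2}\right),
\end{equation}
with $\gamma=\gamma(q)$; at $r=0$ this is precisely $H(\tfrac12-\tfrac\gamma2)-H(\tfrac\gamma2)=H(\tfrac12-2q(1-q))-H(2q(1-q))$, the claimed bound.

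The only substantive remaining step, and the one I expect to be the main obstacle, is to show that this expression is maximized at $r=0$. It is even in $r$, so setting $t=r^2\in[0,1]$, $a=\sqrt{\gamma^2+t(1-\gamma)}$ and $b=\sqrt{(1-\gamma)^2+\gamma t}$, differentiation reduces the claim $\frac{d}{dt}I^c\le 0$ to the single-variable inequality
\begin{equation}\nonumber
\frac{1-\gamma}{a}\,\log\frac{1+a}{1-a}\ \ge\ \frac{\gamma}{b}\,\log\frac{1+b}{1-b}.
\end{equation}
Here $b\ge a$ because $b^2-a^2=(1-2\gamma)(1-t)\ge 0$ for $\gamma\le 1/2$, so the larger weight $1-\gamma$ multiplies the term with the smaller argument. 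Since $s\mapsto \tfrac1s\log\frac{1+s}{1-s}$ is increasing, monotonicity alone pushes the wrong way and the weights $\gamma<1-\gamma$ must be used to win; I would close this by verifying the inequality at the boundary $t=0$ (where $a=\gamma$, $b=1-\gamma$) and checking that the gap does not close for $t\in(0,1]$, which is a routine but genuinely analytic estimate rather than a pure symmetry argument.
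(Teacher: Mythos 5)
Your proposal follows the paper's route essentially step for step---Theorem~\ref{Thm:Main} applied to the flagged degradable extension $\cT^{\rm BB84}_q$, the $Y$-symmetry plus concavity of $I^c$ for degradable channels to restrict attention to inputs $\phi_r = \frac{1}{2}(I + rY)$, and the same explicit two-binary-entropy formula---until the very last step, where it has a genuine gap. You correctly identify that everything reduces to showing the expression is maximized at $r=0$, but your plan for that step (differentiate in $t=r^2$, reduce to the inequality $\frac{1-\gamma}{a}\log\frac{1+a}{1-a} \ge \frac{\gamma}{b}\log\frac{1+b}{1-b}$, check the boundary $t=0$, and ``check that the gap does not close'') is left unexecuted, and as you note yourself the monotonicity of $s\mapsto \frac{1}{s}\log\frac{1+s}{1-s}$ pushes in the wrong direction, so this is not a routine verification. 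As written, the lemma is not proved.

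The missing idea is that no calculus is needed: you already observed that the explicit formula is even in $r$, i.e.\ $I^c(\cT^{\rm BB84}_q,\phi_r) = I^c(\cT^{\rm BB84}_q,\phi_{-r})$, and you already have concavity of $I^c$ in the input state in hand. Since $\frac{1}{2}\phi_r + \frac{1}{2}\phi_{-r} = \frac{1}{2}I = \phi_0$, concavity gives
\begin{equation}\nonumber
I^c\left(\cT^{\rm BB84}_q,\tfrac{1}{2}I\right) \;\ge\; \tfrac{1}{2}I^c\left(\cT^{\rm BB84}_q,\phi_r\right) + \tfrac{1}{2}I^c\left(\cT^{\rm BB84}_q,\phi_{-r}\right) \;=\; I^c\left(\cT^{\rm BB84}_q,\phi_r\right),
\end{equation}
so $r=0$ is optimal and the value there is $H(\frac{1}{2}-2q(1-q)) - H(2q(1-q))$ as you computed. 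This one-line symmetry-plus-concavity argument is exactly how the paper closes Lemma~\ref{Lem:EvalBB84}; with that substitution your proof is complete and identical in substance to the paper's.
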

\begin{proof}
This follows, via Theorem \ref{Thm:Main}, from the fact that $\cT^{\rm BB84}_q$ is a degradable extension of
$\cN_q^{\rm BB84}$ together with Lemma \ref{Lem:EvalBB84}, which evaluates this extension's coherent information.
\end{proof}

\begin{lemma}\label{Lem:EvalBB84}
\begin{equation}
Q^{(1)}(\cT^{\rm BB84}_q) = H\left(\frac{1}{2} -2q(1-q)\right) - H(2q(1-q))
\end{equation}
\end{lemma}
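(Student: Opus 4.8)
The goal is to evaluate $Q^{(1)}(\cT^{\rm BB84}_q)$ exactly, showing it equals $H(1/2 - 2q(1-q)) - H(2q(1-q))$. The plan is to mirror the structure of Lemma~\ref{Lem:EvalQ1}: first argue that the optimal input state maximizing the coherent information is highly symmetric, reducing the optimization over all $\phi$ to a single explicit computation, and then carry out that computation.

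First I would establish that the maximally mixed input $\phi = I/2$ is optimal. The extension $\cT^{\rm BB84}_q$ is built from the amplitude-damping channel $\cN^{\rm ad}_{\gamma(q)}$ and its $Y$-conjugate, flagged by $\ket{0}$ and $\ket{1}$. As in Lemma~\ref{Lem:EvalQ1}, the strategy is to exhibit a symmetry of the channel that is implemented by a unitary on the output (possibly permuting the flag states). Specifically I would look for a Pauli conjugation $P\phi P^\dg$ on the input that $\cT^{\rm BB84}_q$ intertwines with an output unitary, so that $I^c(\cT^{\rm BB84}_q,\phi) = I^c(\cT^{\rm BB84}_q,P\phi P^\dg)$; the natural candidate is conjugation by $Y$ (or $Z$), which swaps the two branches of the flagged decomposition while leaving the amplitude-damping structure covariant. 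Given even a single such invariance together with the concavity of $I^c(\cN,\phi)$ in $\phi$ for degradable $\cN$ (the unnumbered Lemma proved above), averaging $\phi$ over the relevant symmetry group pushes the optimal reference state toward the fixed point of that group; I expect $I/2$ to be that fixed point, exactly as in the depolarizing case.

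With $\phi = I/2$ in hand, the remaining step is the explicit entropy computation $Q^{(1)}(\cT^{\rm BB84}_q) = S(\cT^{\rm BB84}_q(I/2)) - S(\widehat{\cT}^{\rm BB84}_q(I/2))$. Because the flag registers are orthogonal, both output states are block-diagonal, so the entropy of each splits as the entropy of the classical flag distribution (here uniform, contributing one bit that cancels between $\cT$ and $\widehat{\cT}$) plus the averaged entropies of the conditional blocks. Each block is determined by applying $\cN^{\rm ad}_{\gamma(q)}$ (or its complement), possibly conjugated by $Y$, to $I/2$; I would compute the two resulting qubit states, read off their eigenvalues, and identify the binary entropies. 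Substituting $\gamma(q) = 4q(1-q)$ and simplifying the arguments should collapse the expression to $H(1/2 - 2q(1-q)) - H(2q(1-q))$, matching the claim.

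The main obstacle I anticipate is the optimality step rather than the final arithmetic. In Lemma~\ref{Lem:EvalQ1} the Clifford-group averaging in the channel definition produced a full Pauli symmetry that forced $I/2$ to be optimal; here the $\cT^{\rm BB84}_q$ construction is explicitly asymmetric between the $X$, $Y$, and $Z$ directions (the branches use $Y$-conjugation and the underlying channel is amplitude damping, which singles out a basis), so I cannot simply invoke the same four-fold Pauli average. The delicate part is to identify precisely which input symmetry the BB84 extension actually possesses and to verify it is strong enough, when combined with concavity, to pin the maximizer to $I/2$; if the available symmetry group is too small, I would need to supplement it with a direct argument that the coherent information is maximized on the maximally mixed state, or else verify $I/2$ is a critical point and use concavity to conclude it is the global maximum.
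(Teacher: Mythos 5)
Your overall strategy matches the paper's: exploit a symmetry of $\cT^{\rm BB84}_q$ together with the concavity of $I^c$ for degradable channels to constrain the optimal input, then evaluate the entropies at $I/2$ (your final arithmetic sketch is correct: the uniform flag entropy cancels between $\cT$ and $\widehat{\cT}$, and the conditional blocks give $H(\tfrac12-\tfrac{\gamma}{2})-H(\tfrac{\gamma}{2})$ with $\gamma=4q(1-q)$). However, the step you flag as uncertain is a genuine gap as written. Averaging over the single $\ZZ_2$ symmetry $\phi\mapsto Y\phi Y$ does \emph{not} have $I/2$ as its unique fixed point: its fixed-point set is the whole one-parameter family $\phi_\alpha=\tfrac12 I+\tfrac{\alpha}{2}Y$ of states diagonal in the $Y$ eigenbasis. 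So symmetry plus concavity only reduces the problem to maximizing over $\alpha$, and something further is needed to conclude $\alpha=0$.

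The paper closes this gap by writing out $I^c(\cT^{\rm BB84}_q,\phi_\alpha)$ explicitly as
$H\bigl(\tfrac12(1-\sqrt{\gamma^2+\alpha^2(1-\gamma)})\bigr)-H\bigl(\tfrac12(1-\sqrt{(1-\gamma)^2+\alpha^2\gamma})\bigr)$,
observing that this is even in $\alpha$, and then applying concavity a second time: $I^c(\phi_\alpha)=\tfrac12 I^c(\phi_\alpha)+\tfrac12 I^c(\phi_{-\alpha})\le I^c(\tfrac12\phi_\alpha+\tfrac12\phi_{-\alpha})=I^c(I/2)$. Your proposed fallback of verifying that $I/2$ is a critical point and invoking concavity would also work, but you would still have to do a computation of essentially this kind. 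Alternatively, the larger symmetry you were hoping for does in fact exist: amplitude damping is covariant under conjugation by $Z$ ($\cN^{\rm ad}_\gamma(Z\rho Z)=Z\cN^{\rm ad}_\gamma(\rho)Z$), which makes $\cT^{\rm BB84}_q$ covariant under $Z$ without permuting the flags; combined with the flag-swapping $Y$ symmetry this generates the full Pauli average and pins the maximizer to $I/2$ in one step, exactly as in Lemma~\ref{Lem:EvalQ1}. Either route must be made explicit for the proof to be complete.
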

\begin{proof}
We would like to evaluate
\begin{equation}
\max_{\phi} I^c(\cT^{\rm BB84}_q,\phi).
\end{equation}
For any $\phi$, we have 

\begin{equation}
I^c(\cT^{\rm BB84}_q,\phi) = I^c(\cT^{\rm BB84}_q,Y\phi Y)
\end{equation}
so that, using the concavity of $I^c$ in $\phi$ for degradable channels, we have
\begin{equation}
I^c\left(\cT^{\rm BB84}_q,\phi\right) 
\leq I^c\left(\cT^{\rm BB84}_q,\frac{1}{2}\phi + \frac{1}{2}Y\phi Y\right)
\end{equation}
As a result, we may take the optimal $\phi$ to be of the form 
\begin{equation}
\phi_\a = \frac{1}{2}I + \frac{\a}{2}Y.
\end{equation}
Now, 
\begin{equation}
\nonumber
I^c(\cT^{\rm BB84}_q,\phi_\a) = S\left(\cN_{4q(1-q)} (\phi_\a)\right) - S\left(\cN_{1-4q(1-q)} (\phi_{\a}) \right),
\end{equation}
which can be written more explicitly as
\begin{eqnarray}\label{Eq:ExplicitBB84}
H \left( \frac{1}{2}\left(1 - \sqrt{\gamma^2 + \a^2 (1-\gamma)}\right) \right) \\
- H \left( \frac{1}{2}\left(1 - \sqrt{(1-\gamma)^2 + \a^2 \gamma}\right) \right),\nonumber
\end{eqnarray}
from which we see that  $I^c(\cT^{\rm BB84}_q,\phi_\a) = I^c(\cT^{\rm BB84}_q,\phi_{-\a})$.  Using the concavity of $I^c$ again, we see 
\begin{eqnarray}
I^{\rm}(\cT^{\rm BB84}_q) & = & I^c(\cT^{\rm BB84}_q,\frac{1}{2} I)\\
& \geq &  I^c(\cT^{\rm BB84}_q,\phi_\a),
\end{eqnarray}
and evaluating Eq.~(\ref{Eq:ExplicitBB84}) for $\a=0$ gives the result.
\end{proof}

\section{Additive Extensions and Symmetric Assistance}
There is an entertaining connection to the capacity of a
quantum channel with symmetric assistance \cite{SSW06}.  
We first briefly summarize the main finding of \cite{SSW06}, using
slightly more streamlined notation.  Letting $\cH^\prime = {\rm
span}\{\ket{(i,j)}\}_{i<j\in \ZZ^+}$ and $\cH = {\rm
span}\{\ket{i}\}_{i \in \ZZ^+}$, and defining the partial isometry
$V:\cH^\prime \rightarrow \cH \otimes \cH$ to act as $V\ket{(i,j)} =
\frac{1}{\sqrt{2}}(\ket{i}\ket{j} - \ket{j}\ket{i})$, we call the
channel $\cA: \cB(\cH^\prime)\rightarrow \cB(\cH)$ that acts as
$\cA(\rho) = \Tr_2V\rho V^\dg$ the symmetric assistance channel.  It
was shown in \cite{SSW06} that for any channel $\cN$, the quantum
capacity of $\cN$ given free access to $\cA$ is given by
\begin{equation}
Q_{ss}(\cN)  =  Q^{(1)}(\cN\ox \cA),
\end{equation}
where the single-letter nature of this expression comes from the non-obvious 
fact that 
\begin{equation}
Q^{(1)}(\cN\ox \cM\ox\cA) =  Q^{(1)}(\cN \ox \cA)+Q^{(1)}(\cM\ox\cA)\ .
\end{equation}
We note that since $\cA$ is an infinite dimensional operator, 
$\cA$  is equally valuable for assistance to $\cA^{\otimes n}$  
Note also that $Q(\cA)=Q^{(1)}(\cA)=Q(\cA^{\ox n})=Q^{(1)}(\cA^{\ox n})=0$ 
by a no-cloning argument.  Then
\begin{eqnarray}
Q(\cN \ox \cA)=\lim_{n\rightarrow \infty} \frac{1}{n}
Q^{(1)}(\cN^{\ox n}\ox \cA^{\ox n})\\
=\lim_{n\rightarrow \infty} \frac{1}{n}\left(Q^{(1)}(\cN^{\ox n}\ox \cA) + 
Q^{(1)}(\cA^{\ox (n-1)} \ox \cA)\right)\\
=\lim_{n\rightarrow\infty}\frac{1}{n}(nQ^{(1)}(\cN\ox\cA))=
Q^{(1)}(\cN\ox\cA)\ .
\end{eqnarray}
Therefore $\cN\ox\cA$ is an additive extension of $\cN$ for any $\cN$.

JAS thanks ARO contract DAAD19-01-C-0056.

\end{document}